% This is samplepaper.tex, a sample chapter demonstrating the
% LLNCS macro package for Springer Computer Science proceedings;
% Version 2.20 of 2017/10/04
%
\documentclass[runningheads]{llncs}
\usepackage{graphicx}
\usepackage{booktabs}
\usepackage{bm}
\usepackage{amsmath}
\usepackage{amssymb}
\usepackage[square,numbers]{natbib}
\usepackage{float}
\usepackage{subcaption}
\captionsetup{compatibility=false}
\usepackage[a4paper,top=3cm,bottom=2cm,left=3cm,right=3cm,marginparwidth=1.75cm]{geometry}

% Used for displaying a sample figure. If possible, figure files should
% be included in EPS format.
%
% If you use the hyperref package, please uncomment the following line
% to display URLs in blue roman font according to Springer's eBook style:
% \renewcommand\UrlFont{\color{blue}\rmfamily}

\begin{document}
	%
	% \title{Contribution Title\thanks{Supported by organization x.}}
	\title{Quantifying Filter Bubbles: Analyzing Surprise in Elections}
	%
	%\titlerunning{Abbreviated paper title}
	% If the paper title is too long for the running head, you can set
	% an abbreviated paper title here
	%
	\author{Massand Sagar Sunil \orcidID{0000-0002-1095-6853}\and
		Swaprava Nath \orcidID{0000-0001-8309-5006}}
	%Third Author\inst{3}\orcidID{2222--3333-4444-5555}}
	%
	%\authorrunning{F. Author et al.}
	% First names are abbreviated in the running head.
	% If there are more than two authors, 'et al.' is used.
	%
	\institute{Indian Institute of Technology, Kanpur \\ 
		%\email{lncs@springer.com}\\
		%\url{https://www.cse.iitk.ac.in/}\\
		\email{\{smassand,swaprava\}@cse.iitk.ac.in}}
	\maketitle              % typeset the header of the contribution
	\begin{abstract}
		%Try to clean this next line.
		%
		This work analyses surprising elections, and attempts to quantify the notion of surprise in elections. A voter is surprised if their estimate of the winner (assumed to be based on a combination of the preferences of their social connections and popular media predictions) is different from the true winner. A voter's social connections are assumed to consist of contacts on social media and geographically proximate people. We propose a simple mathematical model for combining the global information (traditional media) as well as the local information (local neighbourhood) of a voter in the case of a two candidate election. We show that an unbiased, influential media can nullify the effect of filter bubbles and result in a less surprised populace. Surprisingly, an influential media source biased towards the winners of the election also results in a less surprised populace. Our model shows that elections will be unsurprising for \emph{all} of the voters with a high probability under certain assumptions on the social connection model in the presence of an influential, unbiased traditional media source. Our experiments with the UK-EU referendum (popularly known as Brexit) dataset support our theoretical predictions. Since surprising elections can lead to significant economic movements, it is a worthwhile endeavour to figure out the causes of surprising elections.

		\keywords{Filter bubble  \and Elections \and Surprising elections.}
	\end{abstract}
	\section{Introduction}
	%TODO: Give explanation of filter bubble in more detail. 
	There have been a spate of recent elections where the election results have been surprising, in terms of polling data as well as voters' impressions of how the election would turn out.
	The 2016 US presidential elections and the Brexit referendum are prominent examples of this phenomenon. The economic activity after the election result in the case of Brexit confirmed that the election result was not expected by a large fraction of the voters (\citet{ramiah2017sectoral}). Surprising results can also lead to overreactions in the stock markets (\citet{de1985stockOverreaction}), and reduced liquidity in the markets just before the results (\citet{cox2017political}). 
	%In an increasingly connected world with limited trade restrictions,
	Thus, these surprising outcomes can possibly lead to economic instability, even triggering financial collapses around the world. 
	Additionally, widespread perception that a candidate is winning could lead to depressed voter turnout for them and sway independent voters against them, leading to a change in the election outcome.  
	Thus, it becomes imperative to understand the structures governing individual predictions, and how they could be improved to ensure that voter surprise is minimised.
	%TODO: Add the influence of traditional media and how you are modelling predictions as a combination of the two. 
%	In a world increasingly connected by social media, and with the decline of traditional news media, news is increasingly consumed via social media. Additionally, with social networks showing content tailored for each user, it becomes straightforward for a user to believe that most people agree with their views. This state of intellectual isolation (due to websites showing information that they guess the user would want to see) is known as a filter bubble. Such filter bubbles are believed to be part of the reason for the surprising results in the 2016 US presidential elections and the Brexit referendum (\citet{difranzo2017filter}).
	% Hence, it is a significant problem to understand the influence of such filter bubbles as well as the conditions under which this influence can be circumvented.
	
	One of the reasons for such surprising elections is traditional media and polling websites getting their predictions of the election wrong. In the case of the 2016 US presidential elections and the Brexit referendum, popular traditional media and polling sites were predicting wins for Clinton and the Remain camp respectively. However, this was not the first time that polls were erroneous, with polling errors occurring even in the 2015 UK elections. Additionally, later studies (\citet{jennings2018election}, \citet{blumenthalUS2016evaluation}) showed that the magnitude of polling errors in these elections had been largely consistent with historical polling errors. What had changed, however, was the influence of social media in people's lives. 
	
	In a world increasingly connected by social media, and with the decline of traditional news media, news is increasingly consumed via social media. However, on social media, people are more likely to see content tailored to their liking and more likely to connect to people who share their views (\citet{pariser2011filter}). Thus, it becomes straightforward for a user to believe that most people agree with their views and support their candidate. This state of intellectual isolation (due to websites showing information that they guess the user would want to see) is known as a filter bubble (\citet{pariser2011filter2}). The term was coined by Eli Pariser in his 2010 book, in which he explores the issues with personalization of search. Such filter bubbles are believed to be part of the reason for the surprising results in the 2016 US presidential elections and the Brexit referendum (\citet{difranzo2017filter}). In cases where the predictions from both traditional and social media are incorrect, we would have a large fraction of the voting population being surprised. Hence, in this work, we attempt to take into account the influence of both traditional and social media, and model a voter's prediction framework as a combination of traditional media predictions and the voter's social network structure.
	
	\subsection{Related work} 
	%TODO: Liquid democracy feels strange, either connect it or drop this and add some studies in social choice. 
%	There has been significant recent work in voting theory, ranging from different models of voting(\citet{kahng2018liquid},  \citet{green2015direct}, \citet{alger2006voting}), to models of manipulation of an election. (\citet{dey2017frugal}). One of these models of voting is liquid democracy (\citet{kahng2018liquid},  \citet{green2015direct}, \citet{alger2006voting}), which is a cross between direct voting and representative democracy. In this model, voters have the option of assigning their vote to other (assumed to be better informed) voters on a particular issue. It avoids the representative democracy trap of a legislator serving voters who did not vote for her. \citet{alger2006voting} shows that this liquid democracy model is more robust to gerrymandering.

	The notions of surprise and suspense are defined in a recent work by \citet{ely2015suspense} in a setting with a fixed state space and a finite number of periods dependent on the belief martingale. In their particular model, the surprise is considerable in a particular period if the current belief is far from last period's belief, while the suspense generated is significant if the variance of next period's beliefs is large. The motivation for this model comes from elections, sports and mystery novels. There are also some interesting works on estimating the margin of victory, by \citet{dey2015estimating}, which can serve as a basis for extensions to the model that we have proposed in this work.

	Social media, and its connection with elections has received a lot of attention in the recent past. Some of these works focus on the impact of social media on elections (\citet{metaxas2012social},\citet{broersma2012social}), polling using social media (\citet{o2010tweets},\citet{cummings2010needs},\citet{tumasjan2010predicting}) and the spread of fake news using social media(\citet{difranzo2017filter},\citet{maheshwari2016fake},\citet{allcott2017social}). \citet{difranzo2017filter} describe how the 2016 US presidential campaigns propogated fake news. \citet{metaxas2012social} describe methods to manipulate social media and search results in favour of the candidates,while \citet{broersma2012social} analyse the impact of social media as a cheaper news source in the election cycle. 
	
	There has also been some recent work regarding surprise in elections. \citet{scheve1999electoral} analyse U.S. midterm elections, and argue that the more surprised voters are about the outcome of the presidential election, the less likely they are to be voting for the president's party in the midterm elections. On the other hand, \citet{dey2018surprise} look at an individual boolean notion of surprise, i.e. a voter is surprised if their predicted candidate does not win, and unsurprised if their predicted candidate wins. In the two candidate model, voters are divided into two groups based on their preferred candidates. A voter's connection probability to another voter in the same group ($p$) is greater than her connection probability to a voter in the opposite group($q$). The voter's prediction ($Z$) is formulated from her connections as well as an unconstrained belief about her connection probabilities to voters in the same group ($\hat{p}_{ii}$) as well as the voters in the opposite group. ($\hat{p}_{ij}$) as follows:
	\begin{equation}
Z = \begin{cases}
i & \quad \text{if } \frac{C_{ii}}{\hat{p}_{ii}} + 1 - \frac{C_{ij}}{\hat{p}_{ij}} > 0 \\
j & \quad \text{otherwise}
\end{cases}
	\end{equation}
	where $C_{ii}$ is the number of connections the voter has to other voters of his own class, while $C_{ij}$ is the number of connections the voter has to voters of the other class. 
		\subsection{Our contribution}

	We propose a voter prediction framework which combines the information obtained from the voter's social network structure (local information), and traditional media (global information), which includes news sources and polling websites. From the social network structure, our model uses the number of voters of both the candidates who are the voter's connections.
	%Clean the following passage if possible.
	In contrast to \citet{dey2018surprise}, which assumes a symmetric connection model between voters, considering the asymmetric nature of modern social media, we take the network structure to be a directed graph. A voter $j$ is a connection of voter $i$ if there is a directed edge from $i$ to $j$. This asymmetric structure corresponds closely to the asymmetric follower structure of most major social networks instead of the symmetric friend structure. It also allows a more realistic modelling of the outsized influence that a few power users have in any social network. 
	%Passage to be cleaned ends.
	
	Our primary focus in this work is two candidate elections. 
	Given a static distribution of voters for the two candidates, and under the assumption that there are a sufficiently large number of voters, we look at the combined influence of traditional and social media on the accuracy of voter predictions about the election. Under certain assumptions on the social network structure, 
	our model predicts that all voters will be unsurprised with a high probability in the presence of an influential media source which is either unbiased or biased towards the winning candidate. We also predict that the voters of the winning candidate will not be surprised unless there is a significant media bias towards the losing candidate. 
	
	In the presence of an uninfluential traditional media source, our theoretical results show that voter surprise is the same as in the case of a complete absence of traditional media, for elections with a sufficiently large number of voters. To analyse elections with a smaller voting population, we conduct experiments on small sub-samples of the Brexit dataset. Our experimental results show that an uninfluential media source which is neutral or biased towards the winning candidate can reduce the surprise of the minority voters. 
	
	\section{Model}
	There are several components in our model which need to be understood to understand the notion of surprise in our model. We start with the voter model, where we divide the voters into classes based on the their preferred candidates. Since we are looking at two candidate elections with standard, anonymous voting rules, all of the standard voting rules reduce to plurality voting. We then move on to the voter connection model, which describes the asymmetric social network structure used in our model, and how voters are able to observe the preferred candidates of their connections. Voters' winner perception model combines the influence of traditional media and social media, and is followed by the traditional media model, which explains the modelling  of influential and uninfluential traditional media sources. 
	\subsection{Voter model}
    Let $N= \{1,2,...,n\}$ be the set of voters and $M = \{a_1, a_2\}$ be the set of candidates. The voters have a linear ordering over each of the candidates (i.e. they have ordinal preferences which are transitive, anti-symmetric and complete), with each linear ordering defining a class. Thus, in this case, the voters are divided into $2$ disjoint classes. Let $H_1$ denote the set of voters who prefer $a_1$ over $a_2$, and $H_2$ denote the set of voters who prefer $a_2$ over $a_1$. Since we will be looking exclusively at standard, anonymous voting rules and there are only two candidates, the outcome of the election is only decided by the number of voters in each class, rather than the identities of the individual members. Thus, we can create a vector of voters $\overrightarrow{N}$ = ($N_1, N_2$)(where $N_i$ denotes the number of voters belonging to class $H_i$), which suffices to decide the outcome of any election (with ties being broken in favour of candidate $a_2$, without loss of generality). In our model, we assume that $N$ is fixed instead of being sampled from a pre-defined probability distribution for each of the classes taken in most models (e.g. \citet{dey2018surprise}). The mapping from voters to their respective classes is denoted by $ \sigma: N \rightarrow \{1,2\}$. Without loss of generality, we take the specific instance where $|H_1| = n(\frac{1}{2} + \epsilon)$ and $|H_2| = n(\frac{1}{2} - \epsilon)$, with the constant $\epsilon \in (0, \frac{1}{2})$. Note that, this implies that candidate $a_1$ wins the election in our analysis. The voters in $H_1$ are referred to as the majority voters, and the voters in $H_2$ are referred to as the minority voters.

	\subsection{Voter connection model}
	Our connection model is a slight modification of the stochastic block model (\citet{mossel2012stochastic}). Similar to the stochastic block model, we have a $2 \times 2$ matrix, $P = [p_{ij}]$, $i,j \in \{1,2\}$. $p_{ij}$ denotes the probability of a player of class $H_i$ having a \textit{directed} edge to a player of class $H_j$. Additionally, $p_{11} = p_{22} > p_{12} = p_{21}$. By overloading notation, we denote the intra-class connection probability by $p = p_{11} = p_{22}$, and the inter-class connection probability by $q = p_{12} = p_{21}$. Note that, since there are filter bubbles in most popular social networks, the intra-class connection probability $p$ is justifiably taken to be  larger than the inter-class connection probability $q$. The graph thus formed is denoted by $G = (N, E)$. A directed edge $(i,j)$ from voter $i$ to voter $j$ denotes that voter $i$ can \textit{observe} voter $j$'s preferred candidate, but not vice-versa. If, however, both $(i,j)$ and $(j,i)$ exist in the graph, both $i$ and $j$ can view each other's preferred candidate. This asymmetric, directed graph structure is used as it corresponds closely to the follower structure of most popular social networks. For a voter $i$, we denote its set of neighbours/connections with $N_i = \{j|(i,j) \in E\}$, with $N_{i,1} = N_i \cap H_1$ and $N_{i,2} = N_i \cap H_2$.
	
	\subsection{Voters' winner perception model}
	%This needs a lot of clarification
	For each voter $i$, we denote $i$'s predicted winner by $Z_i$. Each voter $i$ assumes a binomial distribution for the voters and tries to estimate the unknown parameter $q_1$, the probability of a voter preferring candidate $a_1$ over $a_2$. Voter $i$'s estimate of $q_1$ is denoted by $q_{1,i}$ and is computed by using the maximum a posteriori (MAP) estimate, with the prior being a Beta distribution, and the data (or evidence) being the voters that they observe. 
%	(To understand this, consider that you are given a standard coin, and you need to estimate the probability of heads coming up in a coin toss. Without performing any coin tosses, you would be fairly confident that the probability of heads is $0.5$, and it is unlikely that the coin is heavily biased. Thus, we can consider the prior to be a Beta distribution, for example, $B(4,4)$. On performing 10 coin tosses, 7 come up heads, and 3 come up tails. These 10 coin toss results, are your data, or evidence. However, since your evidence is sparse, you would not want to predict the probability of heads coming up to be $0.7$, and weight it with your prior expectation of the probability of heads being $0.5$. This leads to the MAP estimate.)
	 The Beta prior denotes the global information, which can be assumed to be obtained from media sources such as news (print and electronic) as well as polls. The data denotes the local information, which is obtained from the voter's social network graph, specifically, the number of voters of $a_1$ and $a_2$ who are their connections. The Beta prior has two parameters $\alpha$ and $\beta$, and denotes the likelihood of different $q_1$ values before observing any data. As each voter $i$ computes $q_{1,i}$ as the MAP estimate, 
	\begin{equation}
	\label{probEq}
	q_{1,i} = \frac{\alpha + N_{i,1} + I(i \in H_1) - 1}{\alpha + \beta + N_i - 1} 
	\end{equation}
%	If $q_{1,i} \geq 0.5$
\begin{equation}
\label{predEq}
Z_i = \begin{cases}
a_1 &\quad\text{if } q_{1,i} > 0.5 \\
a_2 & \quad\text{otherwise}
\end{cases} 
\end{equation}

By equations \ref{probEq} and \ref{predEq},
\begin{equation}
\label{jointEq}
Z_i = \begin{cases}
a_1 & \quad\text{if } \alpha + 2 I(i \in H_1) - \beta > N_{i,2} + 1 - N_{i,1} \\
a_2 & \quad\text{otherwise}
\end{cases}
\end{equation}

Note that, based on our voter model, the true winner is $a_1$. Thus, any voter $i$ is surprised if $Z_i = a_2$. Also, note the influence of the Beta prior parameters on a voter's eventual estimate. 
Plugging in different values of $\alpha$ and $\beta$, we can estimate the influence of traditional media on the surprise. We explain that in the following section.

\subsection{Traditional media model}
As detailed above, the impact of traditional media is represented through a Beta prior $B(\alpha, \beta) = \frac{\Gamma(\alpha + \beta) x^{\alpha - 1} (1-x)^{\beta - 1}}{\Gamma(\alpha) \Gamma(\beta)}$, with parameters $\alpha$ and $\beta$. For the purposes of our model, $\frac{\alpha - 1}{\alpha + \beta - 2}$ can be understood as the media's prediction of the probability of a voter preferring candidate $a_1$ over $a_2$.  Let $\alpha + \beta = t$, with $t$ representing the influence of traditional media on the individual voters. 
We model the parameter values as the true distribution, with a slight bias $\delta$, with $\alpha = t(\frac{1}{2} + (\epsilon - \delta))$ and $\beta = t(\frac{1}{2} - (\epsilon - \delta))$, with $\delta \in [-(\frac{1}{2} - \epsilon), \frac{1}{2} + \epsilon]$. Thus, $\delta$ can be thought of as the media bias towards the losing candidate $a_2$, with a large positive $\delta$ denoting a significant media bias towards the losing candidate, and a large negative $\delta$ denoting a significant bias towards the winning candidate. 

\section{Theoretical analysis}
We theoretically analyse the surprise of voters, in the case of an influential (but possibly biased) media source, followed by the case of a non-influential media source. These cases provide a basis for comparison between elections in which the voting population has a significant social media presence and cheap internet access and, consequently, a lower appetite for traditional media, and elections where traditional media is a major influence. 

\subsection{Influential media - Results}
In this theoretical analysis of our model, we analyse the surprise of the majority voters (voters of the winning candidate) and the minority voters (voters of the losing candidate) separately. In our model, we have used a boolean notion of surprise, where a voter is surprised if their predicted winner is not the winner of the election, and unsurprised otherwise. Note that, in this analysis, it does not matter if the voter's preferred candidate wins the election. If the voter had predicted that their preferred candidate's opponent would win, and their preferred candidate wins the election, they are still (pleasantly) surprised. However, the choice of the boolean notion of surprise, coupled with the social network structure of the voters, which is justifiably taken to be a filter bubble, implies that, in the absence of any traditional media, the majority voters would be unsurprised with a very high probability, while the minority voters would be surprised with a very high probability. Thus, we do separate evaluations for majority and minority voters with respect to the impact of traditional media on their surprise. Our modelling choice of a Beta prior for the impact of traditional media, with $\alpha, \beta$ representing the true distributions with some bias $\delta$ towards the losing candidate, as well as global weight $c$, also impacts the fraction of surprised people amongst the majority and minority voters differently. 

Our first theorem explains the impact of traditional media on the surprise of majority voters ($i \in H_1$). Since the majority voters are unsurprised with a high probability in the absence of traditional media, a well-established filter bubble ensures that a media source biased towards the losing candidate does not alter their prediction to the losing candidate. On the other hand, a weak filter bubble coupled with a media source biased towards the losing candidate can cause all of the majority voters to be surprised.

Our second theorem explains the impact of traditional media on the surprise of minority voters ($i \in H_2$). Since the minority voters are surprised with a high probability in the absence of traditional media, a well-established filter bubble only increases their surprise. Thus, the presence of a well-established filter bubble can cause the minority voters to be surprised with high probability. On the other hand, a weak filter bubble coupled with an influential media source can cause minority voters to be unsurprised with high probability.  
\begin{theorem}
	\label{Majority1}
If $\alpha = cn(\frac{1}{2} + (\epsilon - \delta))$ and $\beta = cn(\frac{1}{2} - (\epsilon - \delta))$ for some constant $c$, for sufficiently large $n$:   
\begin{itemize}
\item If $p > \frac{q(1-2\epsilon) - 4c(\epsilon - \delta)}{1 + 2\epsilon}$, for each voter $i \in H_1$, $P(Z_i = a_2) \leq e^{-2\sqrt{n-1}}$ i.e. the voter is surprised with a very low probability. Additionally, $P(\cap_{i \in H_1} Z_i = a_1) \geq 1 - (\frac{1}{2} + \epsilon)ne^{-2\sqrt{n-1}}$. 
\item If $p < \frac{q(1-2\epsilon) - 4c(\epsilon - \delta)}{1 + 2\epsilon}$, for each voter $i \in H_1$, $P(Z_i = a_2) \geq 1 - 2e^{-2\sqrt{n-1}}$ i.e. the voter is surprised with a very high probability. Additionally, $P(\cap_{i \in H_1} Z_i = a_2) \geq 1 - (1 + 2\epsilon)ne^{-2\sqrt{n-1}}$.
\end{itemize}  
%the probability of a  voter $i \in H_1$ being surprised is less than equal to $e^{-2\sqrt{n-1}}$ 
\end{theorem}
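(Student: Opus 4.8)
The plan is to translate the decision rule \eqref{jointEq} into a statement about a single sum of independent bounded random variables and then apply Hoeffding's inequality. First I would specialize \eqref{jointEq} to a majority voter $i \in H_1$ (so $I(i \in H_1) = 1$), which gives $Z_i = a_1$ exactly when $N_{i,1} - N_{i,2} > \beta - \alpha - 1$. Writing $X_i := N_{i,1} - N_{i,2}$ and substituting $\alpha - \beta = 2cn(\epsilon - \delta)$, the surprise event $\{Z_i = a_2\}$ becomes $\{X_i \leq \theta\}$ with threshold $\theta := -2cn(\epsilon - \delta) - 1$.

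Next I would identify the distribution of $X_i$. Under the connection model the out-edges of $i$ are drawn independently, so $N_{i,1} \sim \mathrm{Bin}(|H_1| - 1, p)$ and $N_{i,2} \sim \mathrm{Bin}(|H_2|, q)$ are independent, and $X_i$ is a sum of $n-1$ independent summands each ranging over an interval of length $1$ (the $H_1$-terms in $[0,1]$, the $H_2$-terms in $[-1,0]$). Hence $\mathbb{E}[X_i] = (|H_1|-1)p - |H_2|q = (n(\tfrac12 + \epsilon) - 1)p - n(\tfrac12 - \epsilon)q$. The key computation is the gap $\Delta := \mathbb{E}[X_i] - \theta$: collecting the coefficient of $n$ gives $\Delta = n\big[(\tfrac12+\epsilon)p - (\tfrac12-\epsilon)q + 2c(\epsilon-\delta)\big] + O(1)$, and the bracketed constant is positive precisely when $p > \frac{q(1-2\epsilon) - 4c(\epsilon-\delta)}{1+2\epsilon}$, i.e.\ under the hypothesis of the first bullet. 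So the strict inequality is exactly the condition that the expected margin between $X_i$ and its decision threshold grows linearly in $n$.

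Given a positive linear gap, the concentration step is routine. Since $\Delta = \Theta(n)$ while $(n-1)^{3/4} = o(n)$, for all sufficiently large $n$ we have $\Delta \geq (n-1)^{3/4}$, and Hoeffding's inequality applied to the $n-1$ bounded summands yields
\begin{equation}
P(Z_i = a_2) = P\big(X_i \leq \mathbb{E}[X_i] - \Delta\big) \leq \exp\!\Big(-\tfrac{2\Delta^2}{n-1}\Big) \leq \exp\!\big(-2\sqrt{n-1}\big),
\end{equation}
which is the per-voter bound; the deviation $(n-1)^{3/4}$ is precisely what calibrates the Hoeffding exponent to the clean form $e^{-2\sqrt{n-1}}$. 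The simultaneous statement follows by a union bound over the $|H_1| = n(\tfrac12 + \epsilon)$ majority voters, using only subadditivity. The second bullet is the mirror image: when the inequality on $p$ is reversed, $\Delta$ is negative with $|\Delta| = \Theta(n)$, so the upper-tail Hoeffding bound controls $P(Z_i = a_1)$ instead; bounding the lower tail of $N_{i,1}$ and the upper tail of $N_{i,2}$ separately and combining them gives the per-voter bound $2e^{-2\sqrt{n-1}}$, which is exactly what produces the coefficient $n(1+2\epsilon) = 2\cdot n(\tfrac12 + \epsilon)$ in the final simultaneous bound.

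The main obstacle, and the only place where genuine care is needed, is verifying that the $n$-coefficient of $\Delta$ reproduces the stated threshold on $p$ and confirming that the leftover $O(1)$ terms (essentially $-p+1$ after extracting the linear part) are harmless once absorbed into ``sufficiently large $n$''. Everything downstream is mechanical. I would also double-check that edges out of distinct voters use disjoint coin flips, so the per-voter tail bounds hold as stated, though only subadditivity is invoked for the intersection events.
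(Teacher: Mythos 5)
Your proposal is correct and takes essentially the same route as the paper: the paper works with $Y_i = N_{i,2} - N_{i,1}$ (your $-X_i$) shifted by the threshold $2cn(\epsilon-\delta)+1$, computes the same linear-in-$n$ expected gap whose sign is governed by the stated inequality on $p$, applies Hoeffding with deviation $t = (n-1)^{3/4}$ to get $e^{-2\sqrt{n-1}}$, and finishes with the same union bounds. The only cosmetic differences are the sign convention and, in the second bullet, the paper derives the factor $2$ from a two-sided Hoeffding bound on the single sum rather than splitting the two binomials into separate tails (where, note, you would need the \emph{upper} tail of $N_{i,1}$ and the \emph{lower} tail of $N_{i,2}$, not the reverse as you wrote).
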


\begin{proof}
By equation \ref{jointEq}, and with $i \in H_1$,

\begin{equation*}
(\alpha - \beta) \leq N_{i,2} - (N_{i,1} + 1) \implies Z_i =  a_2
\end{equation*}

Let $Y_i = N_{i,2} - N_{i,1}$. Note that, 
\begin{equation*}
Y_i = \sum_{j \in H_2} I((i,j) \in E) - \sum_{j \in H_1 - \{i\}}I((i,j) \in E)
\end{equation*}

Therefore, 
\begin{equation*}
E[Y_i] = (n(1/2 - \epsilon))q - (n(1/2 + \epsilon)-1)p
\end{equation*}

By Hoeffding's inequality,
\begin{equation}
\label{HGeq}
P(Y_i - E[Y_i] \geq t) \leq e^{-2t^2/(n-1)}
\end{equation}
\begin{equation}
\label{HLess}
P(|Y_i - E[Y_i]| < t) \geq 1 - 2e^{-2t^2/(n-1)}
\end{equation}

Let $X_i = Y_i - (2cn(\epsilon - \delta) + 1)$. 

Therefore, $P(X_i - E[X_i] \geq t) \leq e^{-2t^2/(n-1)}$ and $P(X_i - E[X_i] < t) \geq 1 - 2e^{-2t^2/(n-1)}$.

However, note that, if $X_i \geq 0$, $Z_i = a_2$ i.e. voter $i$ makes an incorrect prediction. 

\begin{equation*}
\begin{split}
E[X_i] & = (n(1/2 - \epsilon))q - (n(1/2 + \epsilon)-1)p - (2cn(\epsilon - \delta) + 1) \\
 & = n/2(q - (p + 2p\epsilon + 2q\epsilon + 4c(\epsilon - \delta))) - (1-p)
\end{split}
\end{equation*}

Let $t = (n-1)^{3/4}$. Note that, for sufficiently large $n$, $E[X_i] < 0$ if  $p > \frac{q(1-2\epsilon) - 4c(\epsilon - \delta)}{1 + 2\epsilon}$ and 

$E[X_i] > 0$ if $p <\frac{q(1-2\epsilon) - 4c(\epsilon - \delta)}{1 + 2\epsilon}$. \\

Note that, if $p > \frac{q(1-2\epsilon) - 4c(\epsilon - \delta)}{1 + 2\epsilon}$, $\exists n_1$ such that $E[X_i] + t < 0$ $\forall n \geq n_1$. 

Therefore, by equation \ref{HGeq}, 

$P(X_i \geq 0) \leq P(X_i > E[X_i] + t) = P(X_i > E[X_i] + (n-1)^{3/4}) \leq e^{-2\sqrt{n-1}}$ 

$\therefore P(Z_i = a_2) \leq e^{-2\sqrt{n-1}}$ 

By the union bound, 
\begin{equation*}
P(\cap_{i \in H_1} Z_i = a_1) = 1 - P(\cup_{i \in H_1} Z_i = a_2) \geq 1 - ((\frac{1}{2} + \epsilon)ne^{-2\sqrt{n-1}})
\end{equation*}

On the other hand, if $p < \frac{q(1-2\epsilon) - 4c(\epsilon - \delta)}{1 + 2\epsilon}$, $\exists n_2$ such that $E[X_i] - t > 0$ $\forall n \geq n_2$. 

Therefore, by equation \ref{HLess},

$P(X_i \geq 0) \geq P(|X_i - E[X_i]| < t) = P(|X_i - E[X_i]| < (n-1)^{3/4}) \geq 1 - 2e^{-2\sqrt{n-1}}$ 

$\therefore P(Z_i = a_2) \geq 1 - 2e^{-2\sqrt{n-1}}$

$\therefore P(Z_i = a_1) \leq 2e^{-2\sqrt{n-1}}$

By the union bound, 
\begin{equation*}
P(\cap_{i \in H_1} Z_i = a_2) = 1 - P(\cup_{i \in H_1} Z_i = a_1) \geq 1 - (1 + 2\epsilon)ne^{-2\sqrt{n-1}}
\end{equation*}

Hence, proved. \\
\end{proof}

\begin{theorem}
	\label{Minority1}
If $\alpha = cn(\frac{1}{2} + (\epsilon - \delta))$ and $\beta = cn(\frac{1}{2} - (\epsilon - \delta))$ for some constant $c$, for sufficiently large $n$: 
\begin{itemize}
\item If $p < \frac{q(1+2\epsilon) + 4c(\epsilon - \delta)}{1-2\epsilon}$, for each voter $i \in H_2$, $P(Z_i = a_2) \leq e^{-2\sqrt{n-1}}$ i.e. the voter is surprised with a very low probability. Additionally, $P(\cap_{i \in H_2} Z_i = a_1) \geq 1 - (\frac{1}{2} - \epsilon)ne^{-2\sqrt{n-1}}$.
\item If $p > \frac{q(1+2\epsilon) + 4c(\epsilon - \delta)}{1-2\epsilon}$, for each voter $i \in H_2$, $P(Z_i = a_2) \geq 1 - 2e^{-2\sqrt{n-1}}$ i.e. the voter is surprised with a very high probability. Additionally, $P(\cap_{i \in H_2} Z_i = a_2) \geq 1 - (1 + 2\epsilon)ne^{-2\sqrt{n-1}}$.
\end{itemize}
\end{theorem}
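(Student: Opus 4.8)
The plan is to mirror the argument of Theorem~\ref{Majority1}, the crucial difference being that for a minority voter $i \in H_2$ the indicator $I(i \in H_1) = 0$. Starting from equation~\ref{jointEq}, I would first record that such a voter has $Z_i = a_2$ exactly when $\alpha - \beta \leq N_{i,2} + 1 - N_{i,1}$. Substituting $\alpha - \beta = 2cn(\epsilon - \delta)$ and writing $Y_i = N_{i,2} - N_{i,1}$, this becomes the clean surprise criterion $Y_i \geq 2cn(\epsilon - \delta) - 1$; the constant here is $-1$ rather than the $+1$ that appeared for majority voters, which is the sole effect of dropping the indicator.

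The key computation is the expectation of $Y_i$, and here the roles of $p$ and $q$ swap relative to the majority case. For $i \in H_2$ the same-class connections (governed by $p$) now point into $H_2$ while the cross-class connections (governed by $q$) point into $H_1$, so
\begin{equation*}
E[Y_i] = (n(\tfrac{1}{2} - \epsilon) - 1)p - n(\tfrac{1}{2} + \epsilon)q.
\end{equation*}
I would then set $X_i = Y_i - (2cn(\epsilon - \delta) - 1)$, so that surprise is precisely the event $X_i \geq 0$, and apply Hoeffding's inequality (equations~\ref{HGeq} and~\ref{HLess}) to $X_i$ with the same deviation parameter $t = (n-1)^{3/4}$ used before, since $Y_i$ is again a sum of $n-1$ independent indicators each of range $1$.

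For large $n$ the sign of $E[X_i]$ is governed by its leading term $n\big((\tfrac{1}{2}-\epsilon)p - (\tfrac{1}{2}+\epsilon)q - 2c(\epsilon-\delta)\big)$. Setting this negative and solving for $p$ yields exactly the threshold $p < \frac{q(1+2\epsilon) + 4c(\epsilon-\delta)}{1-2\epsilon}$, in which case $E[X_i] + t < 0$ for all sufficiently large $n$ and equation~\ref{HGeq} gives $P(X_i \geq 0) \leq e^{-2\sqrt{n-1}}$, establishing low surprise; when $p$ exceeds the threshold, $E[X_i] - t > 0$ eventually, and equation~\ref{HLess} gives $P(X_i \geq 0) \geq 1 - 2e^{-2\sqrt{n-1}}$, establishing high surprise. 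A union bound over the $n(\tfrac{1}{2}-\epsilon)$ voters in $H_2$ then delivers the two simultaneous statements. I expect no genuinely new probabilistic obstacle: the only thing to watch is bookkeeping, since both the swap of $p$ and $q$ and the reversed direction of the threshold inequality reflect that for minority voters a stronger filter bubble (larger $p$) \emph{increases} rather than decreases surprise, the opposite of the majority case.
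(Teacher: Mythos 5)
Your proposal is correct and matches the paper's own (sketched) argument exactly: the same surprise criterion $Y_i \geq 2cn(\epsilon-\delta)-1$, the same shifted variable $X_i = Y_i - (2cn(\epsilon-\delta)-1)$, the swapped roles of $p$ and $q$ in $E[Y_i]$, Hoeffding with $t=(n-1)^{3/4}$, and the union bound over $H_2$. As a minor note, your union bound in the high-surprise case actually yields the slightly tighter constant $(1-2\epsilon)n$ in place of the paper's stated $(1+2\epsilon)n$, which of course still implies the claimed bound.
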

The proof idea is similar to that used in theorem \ref{Majority1}. We define a random variable $Y_i$ in terms of voter $i$'s neighbours in the social network graph. We use Hoeffding's inequality to bound the probability of $Y_i$ being outside a certain bound $t$ (equations \ref{HGeq}, \ref{HLess}). We then define $X_i = Y_i - ((2cn(\epsilon - \delta) - 1))$, and compute $E[X_i]$. For large $n$, $E[X_i] < 0$ if $p < \frac{q(1+2\epsilon) + 4c(\epsilon - \delta)}{1-2\epsilon}$ and  $E[X_i] > 0$ if $p > \frac{q(1+2\epsilon) + 4c(\epsilon - \delta)}{1-2\epsilon}$. If $X_i \geq 0$, $Z_i = a_2$. Setting $t = (n-1)^{3/4}$ in equations \ref{HGeq} and \ref{HLess}, we get the results mentioned above. Using the union bounds, we get the aggregate results for $H_2$. \\

\begin{corollary}
	\label{All1}
If $\alpha = cn(\frac{1}{2} + (\epsilon - \delta))$ and $\beta = cn(\frac{1}{2} - (\epsilon - \delta))$ for some constant $c$, if $p \in (\frac{q(1-2\epsilon) - 4c(\epsilon - \delta)}{1 + 2\epsilon}, \frac{(q(1+2\epsilon) + 4c(\epsilon - \delta))}{1-2\epsilon})$, the probability of \textbf{all} of the voters being unsurprised is greater than equal to $(1 - ne^{-2\sqrt{n-1}})$ for sufficiently large $n$.
\end{corollary}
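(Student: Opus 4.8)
The plan is to derive the corollary directly from the favorable (low-surprise) cases of Theorems \ref{Majority1} and \ref{Minority1}, observing that membership of $p$ in the stated open interval is exactly the conjunction of the two threshold hypotheses those theorems require. First I would note that the lower endpoint of the interval gives $p > \frac{q(1-2\epsilon) - 4c(\epsilon - \delta)}{1 + 2\epsilon}$, which is precisely the hypothesis of the first bullet of Theorem \ref{Majority1}; hence every majority voter $i \in H_1$ satisfies $P(Z_i = a_2) \leq e^{-2\sqrt{n-1}}$. Symmetrically, the upper endpoint gives $p < \frac{q(1+2\epsilon) + 4c(\epsilon - \delta)}{1-2\epsilon}$, the hypothesis of the first bullet of Theorem \ref{Minority1}, so every minority voter $i \in H_2$ also satisfies $P(Z_i = a_2) \leq e^{-2\sqrt{n-1}}$.

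Next, since the true winner is $a_1$, a voter $i$ is unsurprised exactly when $Z_i = a_1$, so the event ``all voters are unsurprised'' is $\bigcap_{i \in N}\{Z_i = a_1\}$. I would bound its complement by a union bound over all $n$ voters: $P\bigl(\bigcup_{i\in N}\{Z_i=a_2\}\bigr) \le \sum_{i\in N} P(Z_i = a_2) \le n\,e^{-2\sqrt{n-1}}$, using the per-voter bounds just established for both classes. Taking complements yields $P\bigl(\bigcap_{i\in N}\{Z_i=a_1\}\bigr) \ge 1 - n\,e^{-2\sqrt{n-1}}$, which is the claimed bound. Equivalently, one can combine the two aggregate bounds already proved in the theorems via a second union bound over the two class-level failure events, since $(\tfrac12+\epsilon)n\,e^{-2\sqrt{n-1}} + (\tfrac12-\epsilon)n\,e^{-2\sqrt{n-1}} = n\,e^{-2\sqrt{n-1}}$.

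The only points needing care are bookkeeping rather than substance, so I expect no genuine obstacle. The phrase ``for sufficiently large $n$'' in each theorem supplies its own threshold, say $n_1$ for Theorem \ref{Majority1} and $n_2$ for Theorem \ref{Minority1}; I would take $n \ge \max(n_1, n_2)$ so that both per-voter bounds hold simultaneously. I would also remark that the hypothesis $p \in (L,U)$ tacitly requires the interval to be nonempty; clearing the (positive, since $\epsilon \in (0,\tfrac12)$) denominators, this reduces to the clean sign condition $q\epsilon + c(\epsilon - \delta) > 0$, which is exactly where any real constraint on the parameters $q,\epsilon,c,\delta$ surfaces — in particular it holds automatically when the media is neutral or biased toward the winner ($\delta \le \epsilon$) but can fail for a sufficiently strong bias toward the loser. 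Since the corollary assumes $p$ lies in the interval, nonemptiness is granted, and the argument is a straightforward union bound on top of the two invoked theorems.
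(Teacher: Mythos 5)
Your proposal is correct and follows exactly the paper's route: the paper proves this corollary in one line by combining the union bounds from Theorems \ref{Majority1} and \ref{Minority1}, which is precisely your argument, with the per-voter bounds and the sum $(\tfrac12+\epsilon)n\,e^{-2\sqrt{n-1}} + (\tfrac12-\epsilon)n\,e^{-2\sqrt{n-1}} = n\,e^{-2\sqrt{n-1}}$ spelled out. Your added remarks on taking $n \ge \max(n_1,n_2)$ and on nonemptiness of the interval (equivalent to $q\epsilon + c(\epsilon-\delta) > 0$, matching the paper's observation that the range can be empty when $\delta > \epsilon$) are correct bookkeeping the paper leaves implicit.
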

The above corollary follows from the union bounds used in theorems \ref{Majority1} and \ref{Minority1}. 
%An interesting point to be noted is that it is possible that the range specified for $p$, for which both the majority and the minority voters are unsurprised, is a null set. This can happen if $\delta > \epsilon$, i.e. the media bias towards the losing candidate is greater than the margin of victory. \\ \\
The following are some interesting conclusions which can be drawn from these theorems:
\begin{enumerate}
\item For a certain range of $p$, (specified by corollary \ref{All1}) all of the voters will be unsurprised with a high probability. However, this range could be a null set, if $\delta > \epsilon$, i.e. the media bias towards the losing candidate is greater than the margin of victory. In such a case, if $p \in (\frac{(q(1+2\epsilon) + 4c(\epsilon - \delta))}{1-2\epsilon}, \frac{q(1-2\epsilon) - 4c(\epsilon - \delta)}{1 + 2\epsilon})$, all of the voters will be surprised with a high probability. (by theorems \ref{Majority1} and \ref{Minority1}).
\item Deeply entrenched filter bubbles ($p$ much larger than $q$) reduce the surprise of the  winning candidate voters, while increasing the surprise of the losing candidate voters. 
\item An influential media source which is neutral or biased towards the winning candidate ($\delta \leq 0$) reduces the surprise of all of the voters.
\item An influential media source (whose bias towards the losing candidate is less than the margin of victory ($0 < \delta < \epsilon$)) has a positive influence on reducing the surprise of all of the voters.  
\item If the margin of victory in the election is huge, all of the voters are less likely to be surprised. (From corollary \ref{All1}, it is easy to observe that $p$'s range is wider in case of an election with a large margin of victory.) 
\end{enumerate}

\subsection{Uninfluential media - Results}
We model an uninfluential media source by setting $t = a n^{\gamma}$, where $\gamma, a$ are some constants with $\gamma \in (0, 1)$. We first look at the impact of uninfluential media on the surprise of the majority voters followed by the effect on the surprise of the minority voters for sufficiently large $n$. However, it is important to note that all of these results hold only for sufficiently large $n$, and are not applicable in the cases of smaller elections. We augment this deficiency by performing an experimental analysis on a small sample of voters in the Brexit dataset, to understand the impact of traditional media in the case of smaller elections.   
\begin{theorem}
	\label{Majority2}
	If $\alpha = an^{\gamma}(\frac{1}{2} + (\epsilon - \delta))$ and $\beta = an^{\gamma}(\frac{1}{2} - (\epsilon - \delta))$ for some constants $a, \gamma$ with $\gamma \in (0,1)$, for sufficiently large $n$, for each voter $i \in H_1$, $P(Z_i = a_2) \leq e^{-2\sqrt{n-1}}$. Additionally, $P(\cap_{i \in H_1} Z_i = a_1) \geq 1 - (\frac{1}{2} + \epsilon)ne^{-2\sqrt{n-1}}$.
\end{theorem}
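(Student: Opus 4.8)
The plan is to mirror the argument of Theorem~\ref{Majority1}, exploiting the fact that the media contribution is now of strictly lower order in $n$. First I would invoke equation~\ref{jointEq}: for $i \in H_1$ the voter is surprised ($Z_i = a_2$) precisely when $\alpha - \beta \le N_{i,2} - (N_{i,1} + 1)$. Since here $\alpha - \beta = 2an^{\gamma}(\epsilon - \delta)$, I would set $Y_i = N_{i,2} - N_{i,1}$ and $X_i = Y_i - (2an^{\gamma}(\epsilon - \delta) + 1)$, so that $X_i \ge 0$ is equivalent to $Z_i = a_2$, exactly as in the influential case. The variable $Y_i$ is a difference of sums of $n-1$ independent edge-indicators, each ranging over an interval of length one, so Hoeffding's inequality (equations~\ref{HGeq} and~\ref{HLess}) applies verbatim with $n-1$ in the denominator of the exponent.

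Next I would compute the expectation. Since $E[Y_i] = n(\tfrac12 - \epsilon)q - (n(\tfrac12 + \epsilon) - 1)p$, we obtain
\begin{equation*}
E[X_i] = \tfrac{n}{2}\bigl(q(1-2\epsilon) - p(1+2\epsilon)\bigr) + p - 2an^{\gamma}(\epsilon - \delta) - 1.
\end{equation*}
The crucial observation is that the media term $2an^{\gamma}(\epsilon - \delta)$ is $o(n)$ because $\gamma < 1$, so it cannot influence the leading-order sign of $E[X_i]$. The coefficient of $n$ is $\tfrac12\bigl(q(1-2\epsilon) - p(1+2\epsilon)\bigr)$, which is strictly negative: the filter-bubble assumption $p > q$ gives $p(1+2\epsilon) > q(1+2\epsilon) > q(1-2\epsilon)$. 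Hence $E[X_i] = -\Theta(n)$, and taking $t = (n-1)^{3/4}$ there exists $n_0$ with $E[X_i] + t < 0$ for all $n \ge n_0$, since both $t$ and the media term are $o(n)$.

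With the sign settled, the tail bound follows as before: by equation~\ref{HGeq},
\begin{equation*}
P(Z_i = a_2) = P(X_i \ge 0) \le P\bigl(X_i > E[X_i] + t\bigr) \le e^{-2t^2/(n-1)} = e^{-2\sqrt{n-1}}.
\end{equation*}
A union bound over the $|H_1| = n(\tfrac12 + \epsilon)$ majority voters then yields $P(\cap_{i \in H_1} Z_i = a_1) \ge 1 - (\tfrac12 + \epsilon)n\,e^{-2\sqrt{n-1}}$, as claimed.

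I expect no serious obstacle here; the substance of the theorem is precisely that an uninfluential media source ($t = an^{\gamma}$ with $\gamma < 1$) is asymptotically invisible to the majority voters, so the conclusion coincides with the no-media regime, i.e. with the $c \to 0$ limit of the threshold $p > \tfrac{q(1-2\epsilon) - 4c(\epsilon-\delta)}{1+2\epsilon}$ of Theorem~\ref{Majority1}, which $p > q$ always satisfies. The only point demanding care is checking that the $\Theta(n)$ negative drift of $E[Y_i]$ dominates \emph{both} the Hoeffding slack $(n-1)^{3/4}$ and the $O(n^{\gamma})$ media term uniformly, including the adverse case $\delta > \epsilon$ in which the media term contributes positively; but since $\max(\gamma, \tfrac34) < 1$, this holds for all sufficiently large $n$.
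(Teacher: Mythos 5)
Your proposal is correct and follows essentially the same route as the paper's proof: the same variables $Y_i = N_{i,2} - N_{i,1}$ and $X_i = Y_i - (2an^{\gamma}(\epsilon-\delta)+1)$, the same Hoeffding bound with $t = (n-1)^{3/4}$, the same observation that the $O(n^{\gamma})$ media term vanishes relative to the $\Theta(n)$ drift so that $p > q$ suffices, and the same union bound. Your explicit note that the equivalence $X_i \ge 0 \iff Z_i = a_2$ holds, and that the adverse case $\delta > \epsilon$ is still absorbed since $\max(\gamma, \tfrac34) < 1$, is a slightly more careful statement of what the paper does implicitly.
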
 
\begin{proof}
	By equation \ref{jointEq}, and with $i \in H_1$,
	\begin{equation*}
	(\alpha - \beta) \leq N_{i,2} - (N_{i,1} + 1) \implies Z_i =  a_2
	\end{equation*}
	
	Let $Y_i = N_{i,2} - N_{i,1}$. Note that, 
	\begin{equation*}
	Y_i = \sum_{j \in H_2} I((i,j) \in E) - \sum_{j \in H_1 - \{i\}}I((i,j) \in E)
	\end{equation*}
	
	Therefore, 
	\begin{equation*}
	E[Y_i] = (n(1/2 - \epsilon))q - (n(1/2 + \epsilon)-1)p
	\end{equation*}
	
	By equation \ref{HGeq},
	
	$P(Y_i - E[Y_i] \geq t) \leq e^{-2t^2/(n-1)}$ 
	
	Let $X_i = Y_i - (2an^{\gamma}(\epsilon - \delta) + 1)$. 
	
	$\therefore P(X_i - E[X_i] \geq t) \leq e^{-2t^2/(n-1)}$. However, note that, if $X_i \geq 0$, $Z_i = a_2$ i.e. voter $i$ makes 
	
	the incorrect prediction. 
	\begin{equation*}
	\begin{split}
	E[X_i] & = (n(1/2 - \epsilon))q - (n(1/2 + \epsilon)-1)p - (2an^{\gamma}(\epsilon - \delta) + 1) \\ 
	& = n/2(q - (p + 2p\epsilon + 2q\epsilon + 4an^{\gamma - 1}(\epsilon - \delta))) - (1-p)
	\end{split}
	\end{equation*}

	Note that, for sufficiently large $n$, $E[X_i] < 0$ if $q < \frac{p(1 + 2\epsilon)}{1 - 2\epsilon}$ or equivalently, $p > \frac{q(1-2\epsilon)}{1 + 2\epsilon}$.
	
	($\because$ $n^{\gamma - 1} \rightarrow 0$ as $n \rightarrow \infty$). 
	
	Let $t = (n-1)^{3/4}$. Note that,  $\exists n_1$ such that $E[X_i] + (n-1)^{3/4} < 0$ $\forall n \geq n_1$.
	
	However, $p > q \implies p > \frac{q(1-2\epsilon)}{1 + 2\epsilon}$.

	$\therefore P(X_i \geq 0) \leq P(X_i > E[X_i] + (n-1)^{3/4}) \leq e^{-2\sqrt{n-1}}$. 
	
	$\therefore P(Z_i = a_2) \leq e^{-2\sqrt{n-1}}$.
	
	Using union bound, 
	\begin{equation*}
P(\cap_{i \in H_1} Z_i = a_1) = 1 - P(\cup_{i \in H_1} Z_i = a_2) \geq 1 - (\frac{1}{2} + \epsilon)ne^{-2\sqrt{n-1}}
	\end{equation*}
	
	Hence, proved. \\
\end{proof}

The above theorem states that any voter of the winning candidate is going to be unsurprised with high probability. Additionally, it states the stronger result that all of the voters of the winning candidate will be unsurprised with high probability.

\begin{theorem}
	\label{Minority2}
If $\alpha = an^{\gamma}(\frac{1}{2} + (\epsilon - \delta))$ and $\beta = an^{\gamma}(\frac{1}{2} - (\epsilon - \delta))$ for some constants $a, \gamma$ with $\gamma \in (0,1)$, for sufficiently large $n$, for each voter $i \in H_2$, $P(Z_i = a_2) \geq 1 - 2e^{-2\sqrt{n-1}}$. Additionally, $P(\cap_{i \in H_2} Z_i = a_2) \geq 1 - (1 + 2\epsilon)ne^{-2\sqrt{n-1}}$.
\end{theorem}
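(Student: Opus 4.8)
The plan is to mirror the argument of Theorem~\ref{Majority2} (and its influential-media analogue Theorem~\ref{Minority1}), but with the roles of the two classes exchanged. First I would specialise equation~\ref{jointEq} to a minority voter $i \in H_2$, where $I(i \in H_1) = 0$, so that $Z_i = a_2$ exactly when $\alpha - \beta \leq N_{i,2} + 1 - N_{i,1}$. Writing $Y_i = N_{i,2} - N_{i,1}$ as a signed sum of independent edge indicators,
\begin{equation*}
Y_i = \sum_{j \in H_2 - \{i\}} I((i,j) \in E) - \sum_{j \in H_1} I((i,j) \in E),
\end{equation*}
the surprise event becomes $\{Y_i \geq \alpha - \beta - 1\} = \{Y_i \geq 2an^{\gamma}(\epsilon - \delta) - 1\}$. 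The intra-class terms now range over $H_2 - \{i\}$ (there are $n(\tfrac12 - \epsilon) - 1$ of them, each carrying probability $p$) and the inter-class terms over $H_1$ (there are $n(\tfrac12 + \epsilon)$ of them, each carrying probability $q$), so $E[Y_i] = (n(\tfrac12 - \epsilon) - 1)p - n(\tfrac12 + \epsilon)q$.

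Next I would set $X_i = Y_i - (2an^{\gamma}(\epsilon - \delta) - 1)$, so that $\{X_i \geq 0\}$ is precisely $\{Z_i = a_2\}$, and compute
\begin{equation*}
E[X_i] = \frac{n}{2}\Big(p - q - 2\epsilon(p+q) - 4an^{\gamma-1}(\epsilon - \delta)\Big) + (1 - p).
\end{equation*}
Because $\gamma \in (0,1)$ forces $n^{\gamma-1} \to 0$, the media contribution is asymptotically negligible and, for sufficiently large $n$, the sign of $E[X_i]$ is controlled by the leading term $\tfrac{n}{2}(p - q - 2\epsilon(p+q))$; that is, $E[X_i] > 0$ iff $p > \tfrac{q(1+2\epsilon)}{1-2\epsilon}$, which is exactly the $c \to 0$ limit of the threshold in Theorem~\ref{Minority1}. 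With $E[X_i]$ then positive and of order $n$, while $t = (n-1)^{3/4}$ is of order $n^{3/4}$, there is an $n_0$ beyond which $E[X_i] - t > 0$; on the event $|X_i - E[X_i]| < t$ we have $X_i > 0$, so the two-sided Hoeffding bound~\ref{HLess} (using that the $n-1$ summands are independent, each of range $1$, and that $2t^2/(n-1) = 2\sqrt{n-1}$) gives $P(X_i \geq 0) \geq 1 - 2e^{-2\sqrt{n-1}}$, i.e. $P(Z_i = a_2) \geq 1 - 2e^{-2\sqrt{n-1}}$. A union bound over the $n(\tfrac12 - \epsilon)$ minority voters then yields $P(\cap_{i\in H_2} Z_i = a_2) \geq 1 - (1 - 2\epsilon)ne^{-2\sqrt{n-1}}$, which implies the stated (slightly looser) aggregate bound.

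The step I expect to be the main obstacle is pinning down the sign of $E[X_i]$. In Theorem~\ref{Majority2} the standing filter-bubble assumption $p > q$ already implies the inequality $p > \tfrac{q(1-2\epsilon)}{1+2\epsilon}$ needed there, because $\tfrac{1-2\epsilon}{1+2\epsilon} < 1$. Here the required inequality is $p > \tfrac{q(1+2\epsilon)}{1-2\epsilon}$, and since $\tfrac{1+2\epsilon}{1-2\epsilon} > 1$ the bare condition $p > q$ is \emph{not} sufficient: a minority voter predicts $a_2$ only when the genuinely stronger, well-established filter bubble of the preceding discussion holds. I would therefore carry out the argument under the explicit hypothesis $p > \tfrac{q(1+2\epsilon)}{1-2\epsilon}$, flagging that the conclusion of the theorem fails (the voter becomes unsurprised) for filter bubbles too weak to meet this threshold.
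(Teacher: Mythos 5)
Your proof is correct and follows the same route as the paper's (sketched) argument: specialise equation~\ref{jointEq} to $i \in H_2$, apply Hoeffding's inequality to $X_i = Y_i - (2an^{\gamma}(\epsilon-\delta)-1)$ with $t = (n-1)^{3/4}$, and finish with a union bound; your computations of $E[Y_i]$ and $E[X_i]$ match what the paper intends, and your union-bound constant $(1-2\epsilon)$ is in fact tighter than the stated $(1+2\epsilon)$. The substantive difference is the hypothesis you add, and you are right to add it: the theorem as printed, and the discussion following it (``regardless of the values of $p$ and $q$, as long as $p > q$''), are false without it. The sign of $E[X_i]$ is governed by $(1-2\epsilon)p - (1+2\epsilon)q$, so for $q < p < \frac{q(1+2\epsilon)}{1-2\epsilon}$ (a nonempty range whenever $\epsilon > 0$) one has $E[X_i] \leq -Cn$ for some constant $C > 0$, and the very same Hoeffding argument then gives $P(Z_i = a_2) \leq e^{-2\sqrt{n-1}}$, i.e.\ the minority voter is \emph{unsurprised} with high probability, contradicting the theorem's conclusion. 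This threshold is precisely the $c \to 0$ limit of Theorem~\ref{Minority1}, as you note; the asymmetry with Theorem~\ref{Majority2} arises because the majority-side threshold $\frac{q(1-2\epsilon)}{1+2\epsilon}$ lies below $q$ (so $p > q$ suffices there), whereas the minority-side threshold lies above $q$ (so it does not here). In short, your proof establishes the corrected statement, and the step you flagged is exactly the point the paper's proof sketch glosses over.
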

The above theorem states that any voter of the losing candidate is going to be surprised with high probability. It also states the stronger result that all of the voters of the losing candidate will be surprised with high probability. 

The proof idea is fairly similar to that of theorems \ref{Majority1} and \ref{Minority1}. The only change is that the influence of traditional media is reduced, so, there is an $n^{\gamma}$ factor representing traditional media instead of $n$, with $\gamma \in (0, 1)$. As shown in the proof of theorem \ref{Majority2}, the $n^{\gamma}$ term can be ignored for significantly large $n$. By using the Hoeffding bound as in theorem \ref{Majority1}, it can be shown that any minority voter will be surprised with a high probability.  Using the union bound on the first result, we get the stronger result of all of the voters of the losing candidate being surprised with high probability. \\ \\
Thus, our uninfluential media model reduces to the case of a complete absence of traditional media for both majority and minority voters. Thus, regardless of the values of $p$ and $q$, as long as $p > q$,  we have the majority voters being unsurprised with a very high probability, and the minority voters being surprised with a very high probability. However, these results are for elections with a huge number of voters, and do not necessarily follow in the case of smaller elections. For an analysis of smaller elections, we conduct experiments on small sub-samples of the Brexit dataset. 

\section{Experimental analysis}
Our results so far have been theoretical. In this section, we validate our theoretical results by performing experiments on the Brexit dataset\footnote{\url{https://goo.gl/xBWA5q}}. We modify our theoretical model very slightly to account for the geographical aspects of the data. Most of our experimental analysis choices mirror those described in \citet{dey2018surprise}.
\subsection{Experimental model}

\subsubsection{Voter model}
The voter model remains the same, with one key difference. Since the Brexit dataset is fairly large, we randomly sample a set of 10000 voters from the total list of voters. We treat this set of voters as the voters for this ``sub-election'', with all of the results being for this sub-election.
\subsubsection{Voter connection model} 
The Brexit dataset consists of 33 million valid votes, across 382 regions in the UK, who voted either remain (R) or leave (L). We use this geographical information, by finding out the latitude and longitude of these regions from another dataset\footnote{\url{https://goo.gl/BBPX4j}}. We then treat each of the voters from a region as residing at that particular latitude and longitude. We need to decide the connection probabilities between voters to complete this model. The following are reasonable requirements from this connection probability model:
\begin{itemize}
\item Voters who are nearby should have a larger connection probability than those farther away. Specifically, for three voters $x$, $y$ and $z$, if $y$ and $z$ belong to the same class, and the distance between $x$ and $y$ is less than the distance between $x$ and $z$, the connection probability of $x$ and $y$ should be higher than that of $x$ and $z$. 
\item Voters who are from the same class should have a higher connection probability than those from opposite classes. Specifically, for three voters $x$, $y$ and $z$, if $y$ and $z$ are at the same location, but belong to different classes, and, without loss of generality, $x$ and $y$ belong to the same class, the connection probability of $x$ and $y$ should be higher than that of $x$ and $z$.  
\end{itemize}  Using the position (latitude and longitude) of a voter, we weight the connection probability of two voters. Thus, the connection probability of voters $i$ and $j$, 
\begin{equation}
r_{i,j} = r_{j,i} = p_{class} \times p_{dist}
\end{equation}  
where 
\begin{equation}
p_{class} = \begin{cases}
p &\quad\text{if } i,j \in H_k,  k \in \{1,2\} \\
q & \quad\text{otherwise}
\end{cases} 
\end{equation}
and 
\begin{equation}
p_{dist} = \exp^{-\text{weight} \times \text{distance}}
\end{equation}
with $\text{weight} = 0.1$, $r_{i,j}$ denoting the probability of $i$ being able to observe $j$'s preferred candidate, and $r_{j,i}$ denoting the probability of $j$ being able to observe $i$'s preferred candidate. 
Each of the voters in the sampled set tries to connect to each of the other voters, with their connection probabilities as described above.
\subsubsection{Voter's winner perception model} 
Each voter's winner perception model remains the same as in the theoretical case, with traditional media sources being modelled as the Beta prior, and their social connections being the data or evidence for a binomial distribution. Each voter $i$ estimates the unknown parameter $q_1$ by using the MAP estimate and coming up with their own (inaccurate) estimate $q_{1,i}$. 

\subsubsection{Traditional media model}
The impact of traditional media is represented through a Beta prior, $B(\alpha, \beta)$, with $\alpha + \beta = cn$ in the case of influential media, where $n$ is the number of voters and the constant $c$ denotes the \textit{global weight}. On the other hand, in the case of uninfluential media, $\alpha + \beta = an^{\gamma}$ where $\gamma$ is the measure of influence and $a$ is the \textit{global weight}. Just like our theoretical results, we assume that these parameter values are in the ratio of the true distribution, with a bias $\delta$ towards the losing candidate with $\delta \in [-(\frac{1}{2} - \epsilon), \frac{1}{2} + \epsilon]$.  Thus, the parameter values are $\alpha = t(\frac{1}{2} + (\epsilon - \delta))$ and $\beta = t(\frac{1}{2} - (\epsilon - \delta))$

\subsection{Results - Influential media}
\begin{figure}[H]
	\centering
	\begin{subfigure}[b]{0.47\textwidth}
		\includegraphics[width=\textwidth, height= 0.21 \textheight]{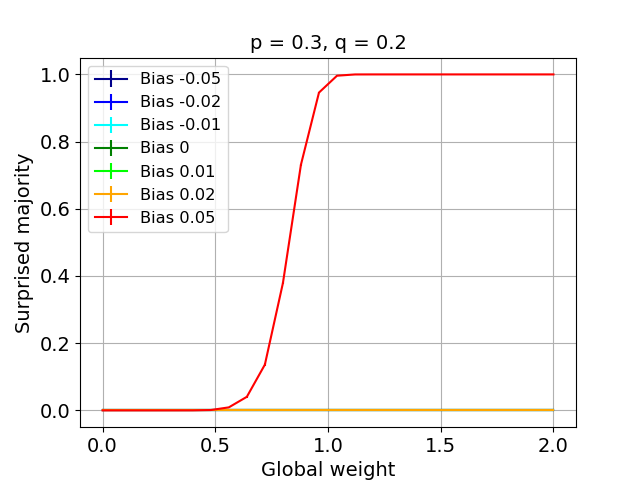}
		%\label{fig:MM}
	\end{subfigure}
	%\hfill
	\begin{subfigure}[b]{0.47\textwidth}
		\includegraphics[width=\textwidth,  height= 0.21 \textheight]{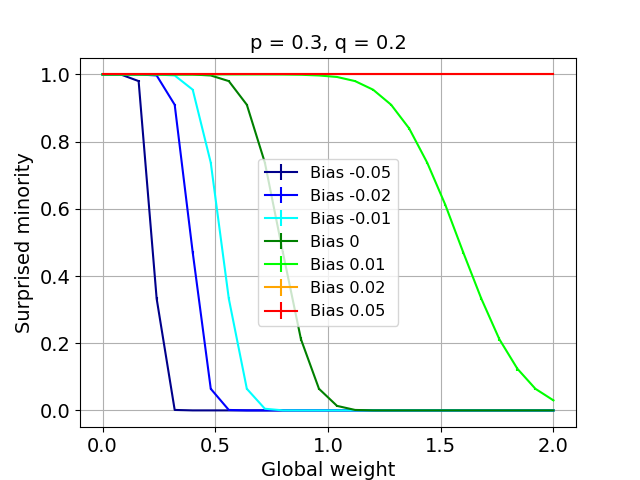}
		%\label{fig:NM}
	\end{subfigure}
	\begin{subfigure}[b]{0.47\textwidth}
		\includegraphics[width=\textwidth,  height= 0.21 \textheight]{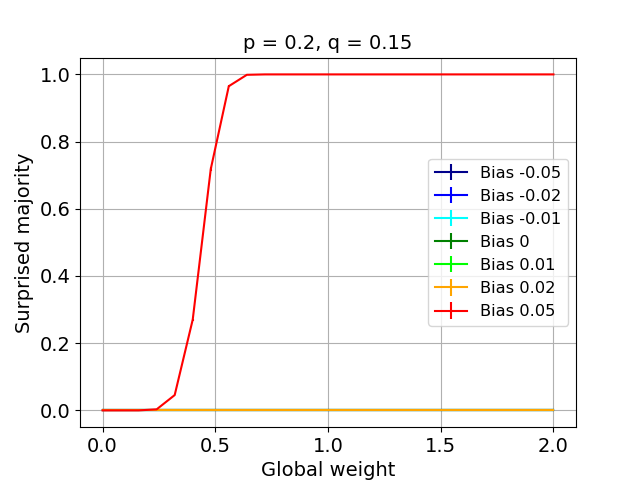}
		%\label{fig:MP5}
	\end{subfigure}
	%\hfill
	\begin{subfigure}[b]{0.47\textwidth}
		\includegraphics[width=\textwidth,  height= 0.21 \textheight]{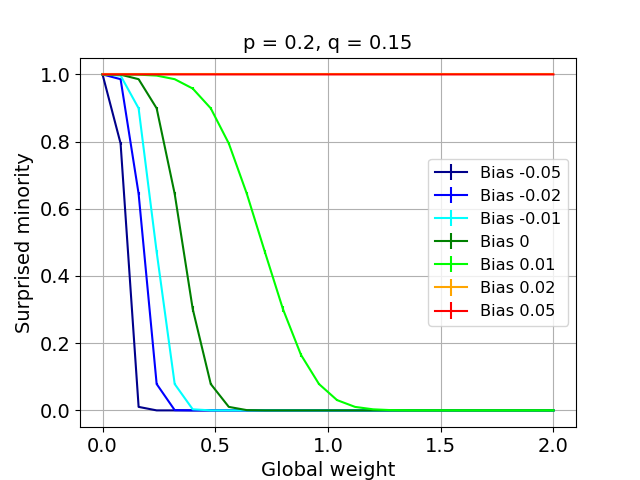}
		%\label{fig:NP5}
	\end{subfigure}

	\begin{subfigure}[b]{0.48\textwidth}
		\includegraphics[width=\textwidth,  height= 0.21 \textheight]{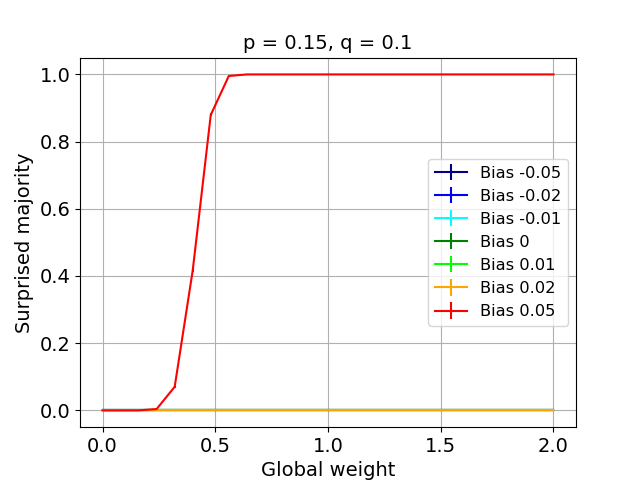}
		%\label{fig:MP10}
	\end{subfigure}
	%\hfill
	\begin{subfigure}[b]{0.48\textwidth}
		\includegraphics[width=\textwidth,  height= 0.21 \textheight]{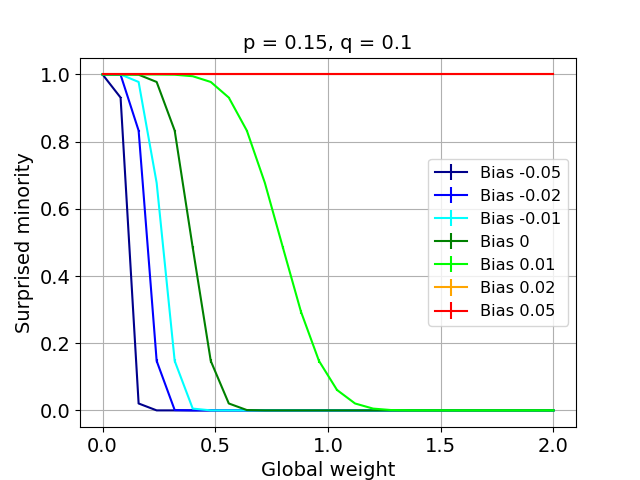}
		%\label{fig:NP10}
	\end{subfigure}
	\begin{subfigure}[b]{0.48\textwidth}
		\includegraphics[width=\textwidth,  height= 0.21 \textheight]{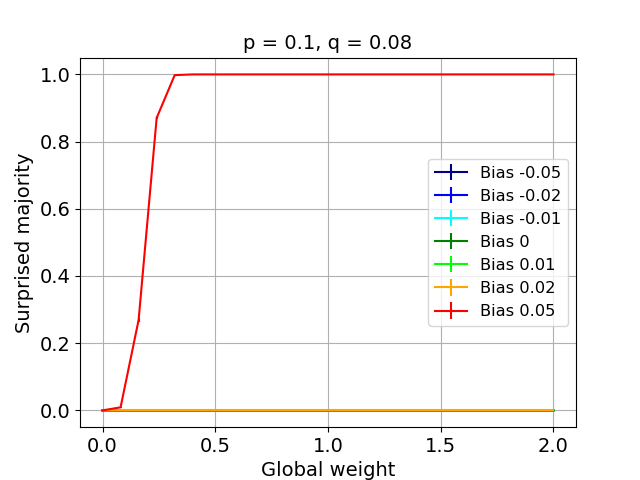}
		%\label{fig:MR}
	\end{subfigure}
	%\hfill
	\begin{subfigure}[b]{0.48\textwidth}
		\includegraphics[width=\textwidth,  height= 0.21 \textheight]{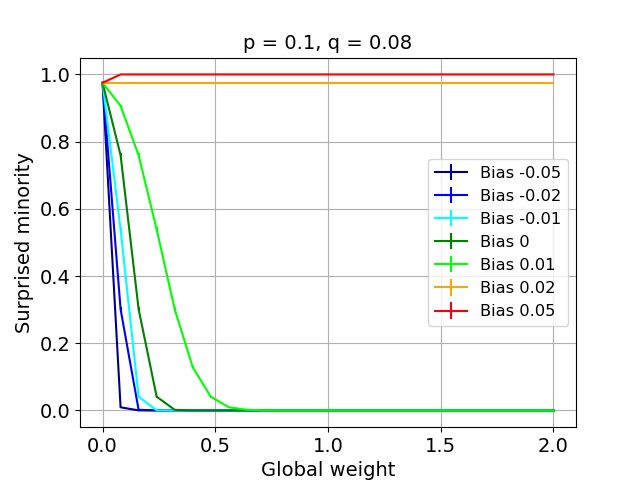}
		%\label{fig:NR}
	\end{subfigure}
	\captionsetup{labelformat=empty}
	\caption{Figure 1. Fraction of surprised majority/minority voters for different values of connection probabilities $p$, $q$ and bias $\delta$, with global weight $c = \frac{\alpha + \beta}{n}$. The voter sample has 4800 remain voters, and 5200 leave voters. }\label{fig:PM}
\end{figure}

\subsection{Results - Uninfluential media}
\begin{figure}[H]
	\centering
	\textbf{$\gamma = 0.5$}\par\medskip
	\begin{subfigure}[b]{0.47\textwidth}
		\includegraphics[width=\textwidth, height= 0.19 \textheight]{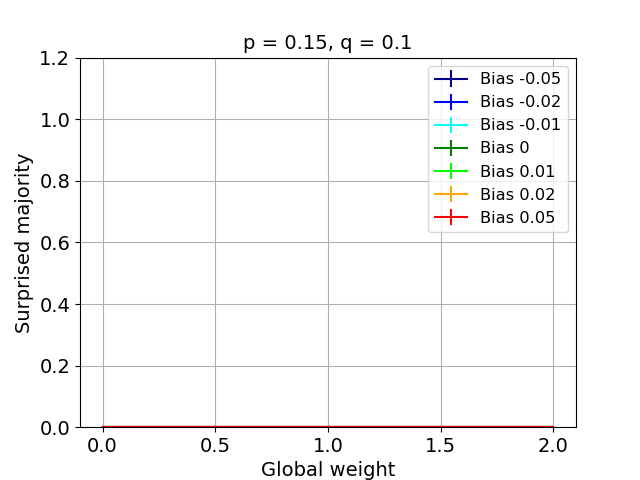}
		%\label{fig:MM}
	\end{subfigure}
	%\hfill
	\begin{subfigure}[b]{0.47\textwidth}
		\includegraphics[width=\textwidth,  height= 0.19 \textheight]{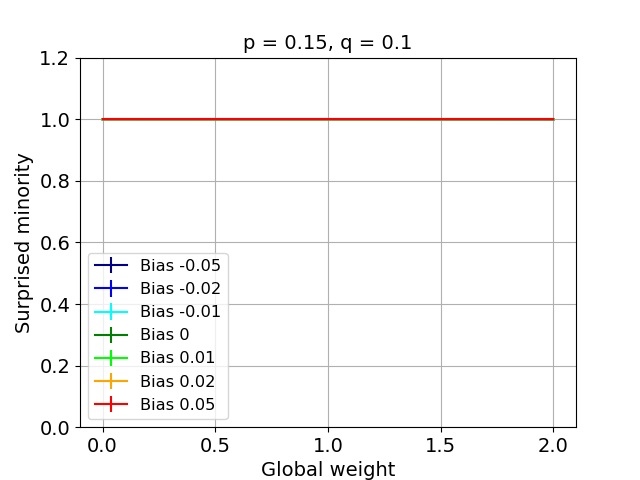}
		%\label{fig:NM}
	\end{subfigure}
	\begin{subfigure}[b]{0.47\textwidth}
		\includegraphics[width=\textwidth,  height= 0.19 \textheight]{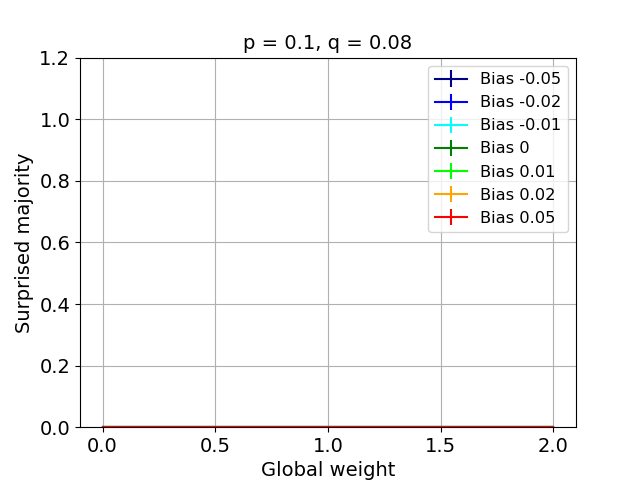}
		%\label{fig:MP5}
	\end{subfigure}
	%\hfill
	\begin{subfigure}[b]{0.47\textwidth}
		\includegraphics[width=\textwidth,  height= 0.19 \textheight]{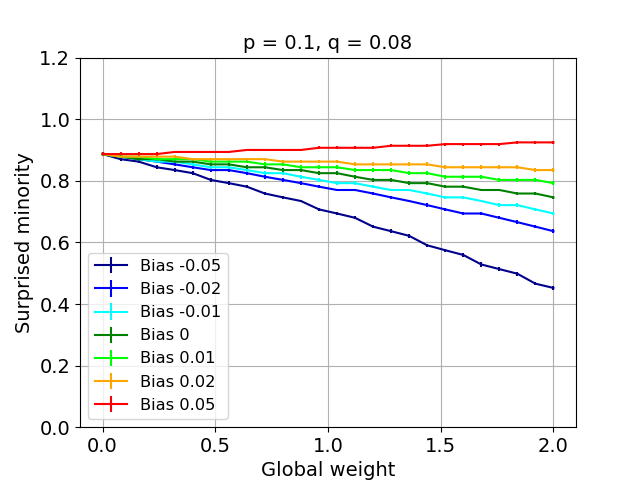}
		%\label{fig:NP5}
	\end{subfigure}
\end{figure}
\begin{figure}[H]
	\centering
	\textbf{$\gamma = 0.8$} \par\medskip
	\begin{subfigure}[b]{0.48\textwidth}
		\includegraphics[width=\textwidth,  height= 0.19 \textheight]{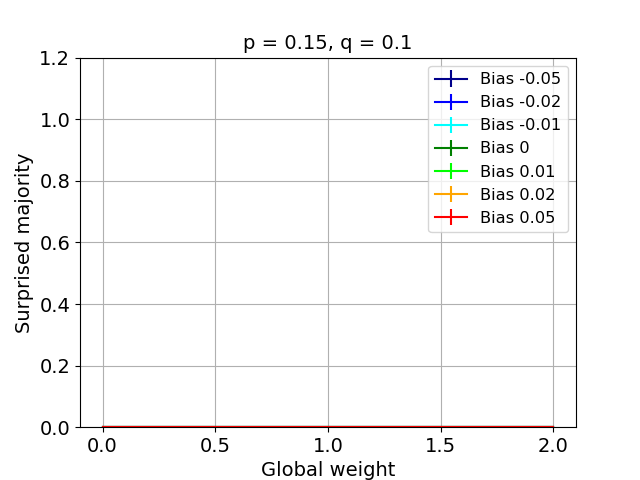}
		%\label{fig:MP10}
	\end{subfigure}
	%\hfill
	\begin{subfigure}[b]{0.48\textwidth}
		\includegraphics[width=\textwidth,  height= 0.19 \textheight]{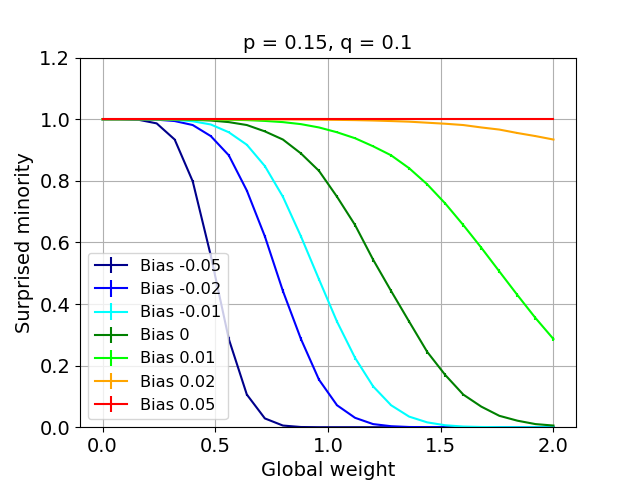}
		%\label{fig:NP10}
	\end{subfigure}
	\begin{subfigure}[b]{0.48\textwidth}
		\includegraphics[width=\textwidth,  height= 0.19 \textheight]{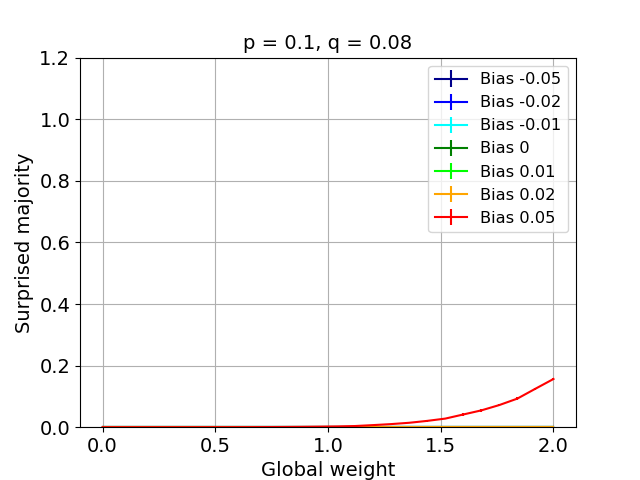}
		%\label{fig:MR}
	\end{subfigure}
	%\hfill
	\begin{subfigure}[b]{0.48\textwidth}
		\includegraphics[width=\textwidth,  height= 0.19 \textheight]{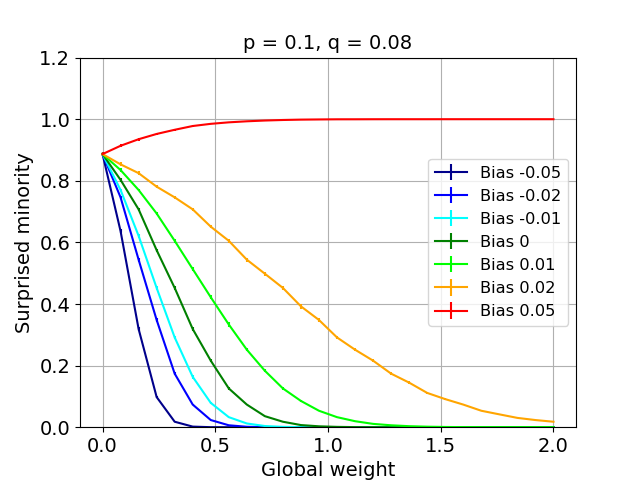}
		%\label{fig:NR}
	\end{subfigure}
	\captionsetup{labelformat=empty}
	\caption{Figure 2. Fraction of surprised majority/minority voters for different values of connection probabilities $p$, $q$, bias $\delta$, and measure of influence $\gamma$ for the case of uninfluential media with global weight $a = \frac{\alpha + \beta}{n^{\gamma}}$. The voter sample has 4666 remain voters, and 5334 leave voters. }
\end{figure}

\subsection{Analysing the results}
The majority voters denote the voters of the winning candidate, (in this case, the leave resolution) while the minority voters denote the voters of the losing candidate (in this case, the remain resolution). Additionally, as mentioned above, the bias $\delta$ denotes the bias of the media towards the losing candidate while $p$ and $q$ are the intra-class and inter-class connection probabilities of the voters. The y-axis denotes the fraction of the majority/minority voters who are surprised. 
\subsubsection{Influential media}
The results seem to follow the theoretical predictions fairly accurately. In the case of influential media, $\alpha + \beta = cn$ where $n$ is the total number of voters in the sub-election, while $c$ is the global weight (represented on the x-axis) denoting the impact of traditional media on the election. Interesting points to be noted include:
\begin{itemize}
	\item For the majority voters, a large difference between $p$ and $q$ (implying a deeply ingrained filter bubble) would result in less surprise in the case of a media heavily biased towards the losing candidate. This can be clearly seen from the majority voter graphs with $p = 0.3, q = 0.2$ and with $p = 0.1, q = 0.08$ respectively. 
	\item For the minority voters, a deeply ingrained filter bubble would result in more surprise. This can be clearly seen from the minority voter graphs with $p = 0.3, q = 0.2$ and with $p = 0.1, q = 0.08$ respectively.
	\item A media heavily biased towards the winning candidate is going to do better at reducing surprise than a more neutral media source.  
\end{itemize}

\subsubsection{Uninfluential media}
In the case of uninfluential media, $\alpha + \beta = an^{\gamma}$ where $n$ is the total number of voters in the sub-election, $a$ is the global weight (represented on the x-axis) denoting the impact of traditional media on the election, while $\gamma \in (0, 1)$ is the influence measure. The results are slightly different from the theoretically predicted results. However, note that, in our theoretical analysis, we had discarded the $n^{\gamma}$ term for significantly large $n$. In our experimental analysis, the term becomes significant. The following are some of the interesting insights obtained from the experiments:
\begin{itemize}
\item In the case of a heavily entrenched filter bubble ($p$ much larger than $q$), and a small influence parameter $\gamma$, the majority voters will be unsurprised with high probability and the minority voters will be surprised with high probability. This can be noted from the figures for surprised majority/minority for $\gamma = 0.5$, with $p = 0.15, q = 0.1$.
\item If the influence parameter $\gamma$ is large, the minority voters will be less surprised in the case of a traditional media which is neutral or biased towards the winner. This effect will be more prominent in the case of a weak filter bubble. 
\end{itemize} 

\section{Conclusion}
In this work, we have introduced an aggregate notion of surprise (based on the boolean model of surprise introduced in \citet{dey2018surprise}) which deals with the fraction of surprised people based on how close the election is, the presence/absence of an influential media source and the influence of filter bubbles on their social connections. We have shown that, in the presence of an influential media source which is neutral or biased towards the winning candidate, the effects of filter bubbles can be curbed to a certain extent. Additionally, influential media sources biased towards the losing candidate can be a major cause for surprising elections.

A good area for further research is to look at a non-boolean model of surprise using some margin of victory measure to figure out the surprise associated with an election. A straightforward way to do this would be to use a distance measure between a voter's predicted margin of victory and the actual margin of victory, and using an aggregation measure (such as the mean of the absolute value of the difference between predicted margin of victory and actual margin of victory). We could then try to find out which media distribution would lead to the least surprised voters.

Another direction requiring study is the extension to elections with more than two candidates. This extension is difficult because, in elections with more than two candidates, there is a dependency on the particular anonymised voting rule being used. However, such a result, even for limited voting rules, would improve our understanding of surprise in elections significantly. 

	\bibliographystyle{plainnat}
	\bibliography{mybibliography}

\begin{thebibliography}{20}
\providecommand{\natexlab}[1]{#1}
\providecommand{\url}[1]{\texttt{#1}}
\expandafter\ifx\csname urlstyle\endcsname\relax
  \providecommand{\doi}[1]{doi: #1}\else
  \providecommand{\doi}{doi: \begingroup \urlstyle{rm}\Url}\fi

\bibitem[Allcott and Gentzkow(2017)]{allcott2017social}
Hunt Allcott and Matthew Gentzkow.
\newblock Social media and fake news in the 2016 election.
\newblock \emph{Journal of Economic Perspectives}, 31\penalty0 (2):\penalty0
  211--36, 2017.

\bibitem[Blumenthal et~al.(2016)Blumenthal, Clement, d~Clinton, Durand,
  Franklin, Miringoff, Olson, Rivers, Saad, Witt,
  et~al.]{blumenthalUS2016evaluation}
Mark Blumenthal, SurveyMonkey~Scott Clement, JoshUA d~Clinton, Claire Durand,
  Charles Franklin, Lee Miringoff, Kristen Olson, Doug Rivers, YouGov~Lydia
  Saad, Gallup~Evans Witt, et~al.
\newblock An evaluation of 2016 election polls in the us.
\newblock 2016.

\bibitem[Broersma and Graham(2012)]{broersma2012social}
Marcel Broersma and Todd Graham.
\newblock Social media as beat: Tweets as a news source during the 2010 british
  and dutch elections.
\newblock \emph{Journalism Practice}, 6\penalty0 (3):\penalty0 403--419, 2012.

\bibitem[Cox and Griffith(2017)]{cox2017political}
Justin Cox and Todd Griffith.
\newblock Political uncertainty and market liquidity: Evidence from the brexit
  referendum and the 2016 us presidential election.
\newblock 2017.

\bibitem[Cummings et~al.(2010)Cummings, Oh, and Wang]{cummings2010needs}
David Cummings, Haruki Oh, and Ningxuan Wang.
\newblock Who needs polls? gauging public opinion from twitter data.
\newblock \emph{Unpublished manuscript}, 2010.

\bibitem[De~Bondt and Thaler(1985)]{de1985stockOverreaction}
Werner~FM De~Bondt and Richard Thaler.
\newblock Does the stock market overreact?
\newblock \emph{The Journal of finance}, 40\penalty0 (3):\penalty0 793--805,
  1985.

\bibitem[Dey and Narahari(2015)]{dey2015estimating}
Palash Dey and Y~Narahari.
\newblock Estimating the margin of victory of an election using sampling.
\newblock In \emph{IJCAI}, pages 1120--1126, 2015.

\bibitem[Dey et~al.(2018)Dey, Kothari, and Nath]{dey2018surprise}
Palash Dey, Pravesh~K Kothari, and Swaprava Nath.
\newblock Surprise in elections.
\newblock \emph{arXiv preprint arXiv:1801.10110}, 2018.

\bibitem[DiFranzo and Gloria-Garcia(2017)]{difranzo2017filter}
Dominic DiFranzo and Kristine Gloria-Garcia.
\newblock Filter bubbles and fake news.
\newblock \emph{XRDS: Crossroads, The ACM Magazine for Students}, 23\penalty0
  (3):\penalty0 32--35, 2017.

\bibitem[Ely et~al.(2015)Ely, Frankel, and Kamenica]{ely2015suspense}
Jeffrey Ely, Alexander Frankel, and Emir Kamenica.
\newblock Suspense and surprise.
\newblock \emph{Journal of Political Economy}, 123\penalty0 (1):\penalty0
  215--260, 2015.

\bibitem[Jennings and Wlezien(2018)]{jennings2018election}
Will Jennings and Christopher Wlezien.
\newblock Election polling errors across time and space.
\newblock \emph{Nature Human Behaviour}, page~1, 2018.

\bibitem[Maheshwari(2016)]{maheshwari2016fake}
Sapna Maheshwari.
\newblock How fake news goes viral: A case study.
\newblock \emph{The New York Times}, 20, 2016.

\bibitem[Metaxas and Mustafaraj(2012)]{metaxas2012social}
Panagiotis~T Metaxas and Eni Mustafaraj.
\newblock Social media and the elections.
\newblock \emph{Science}, 338\penalty0 (6106):\penalty0 472--473, 2012.

\bibitem[Mossel et~al.(2012)Mossel, Neeman, and Sly]{mossel2012stochastic}
Elchanan Mossel, Joe Neeman, and Allan Sly.
\newblock Stochastic block models and reconstruction.
\newblock \emph{arXiv preprint arXiv:1202.1499}, 2012.

\bibitem[O'Connor et~al.(2010)O'Connor, Balasubramanyan, Routledge, Smith,
  et~al.]{o2010tweets}
Brendan O'Connor, Ramnath Balasubramanyan, Bryan~R Routledge, Noah~A Smith,
  et~al.
\newblock From tweets to polls: Linking text sentiment to public opinion time
  series.
\newblock \emph{Icwsm}, 11\penalty0 (122-129):\penalty0 1--2, 2010.

\bibitem[Pariser(2011{\natexlab{a}})]{pariser2011filter}
Eli Pariser.
\newblock \emph{The filter bubble: How the new personalized web is changing
  what we read and how we think}.
\newblock Penguin, 2011{\natexlab{a}}.

\bibitem[Pariser(2011{\natexlab{b}})]{pariser2011filter2}
Eli Pariser.
\newblock \emph{The filter bubble: What the Internet is hiding from you}.
\newblock Penguin UK, 2011{\natexlab{b}}.

\bibitem[Ramiah et~al.(2017)Ramiah, Pham, and Moosa]{ramiah2017sectoral}
Vikash Ramiah, Huy~NA Pham, and Imad Moosa.
\newblock The sectoral effects of brexit on the british economy: early evidence
  from the reaction of the stock market.
\newblock \emph{Applied Economics}, 49\penalty0 (26):\penalty0 2508--2514,
  2017.

\bibitem[Scheve and Tomz(1999)]{scheve1999electoral}
Kenneth Scheve and Michael Tomz.
\newblock Electoral surprise and the midterm loss in us congressional
  elections.
\newblock \emph{British Journal of Political Science}, 29\penalty0
  (3):\penalty0 507--521, 1999.

\bibitem[Tumasjan et~al.(2010)Tumasjan, Sprenger, Sandner, and
  Welpe]{tumasjan2010predicting}
Andranik Tumasjan, Timm~Oliver Sprenger, Philipp~G Sandner, and Isabell~M
  Welpe.
\newblock Predicting elections with twitter: What 140 characters reveal about
  political sentiment.
\newblock \emph{Icwsm}, 10\penalty0 (1):\penalty0 178--185, 2010.

\end{thebibliography}
	%
	
	%\begin{thebibliography}{8}
	%\bibitem{ref_article1}
	%Author, F.: Article title. Journal \textbf{2}(5), 99--110 (2016)
	%
	%\bibitem{ref_lncs1}
	%Author, F., Author, S.: Title of a proceedings paper. In: Editor,
	%F., Editor, S. (eds.) CONFERENCE 2016, LNCS, vol. 9999, pp. 1--13.
	%Springer, Heidelberg (2016). \doi{10.10007/1234567890}
	%
	%\bibitem{ref_book1}
	%Author, F., Author, S., Author, T.: Book title. 2nd edn. Publisher,
	%Location (1999)
	%
	%\bibitem{ref_proc1}
	%Author, A.-B.: Contribution title. In: 9th International Proceedings
	%on Proceedings, pp. 1--2. Publisher, Location (2010)
	%
	%\bibitem{ref_url1}
	%LNCS Homepage, \url{http://www.springer.com/lncs}. Last accessed 4
	%Oct 2017
	%\end{thebibliography}
\end{document}